\newtheorem{example}{Example}
\title{Mechanisms that Play a Game,
not Toss a Coin}
\author{
    Toby Walsh
    \affiliations
    AI Institute, UNSW Sydney
    \emails
    tw@cse.unsw.edu.au
}
\begin{document}

\maketitle

\begin{abstract}
Randomized mechanisms can have good normative properties compared
  to their deterministic counterparts. However, randomized mechanisms
  are problematic in several ways such as in their verifiability. We
  propose here to de-randomize such mechanisms by having agents play a
  game instead of tossing a coin. The game is designed so agents
  play randomly, and this play injects ``randomness''  into
  the mechanism. Surprisingly this de-randomization
  retains many of the good normative properties of the original
  randomized mechanism but gives a mechanism that is deterministic and
  easy, for instance, to audit. We consider three general purpose
  methods to de-randomize mechanisms, and apply these to
  six different domains: voting, facility location, task allocation, school choice, peer
  selection,  and resource allocation.
  We propose a number of novel de-randomized mechanisms for these
  six domains with good normative properties (such as equilibria in which
  agents sincerely report preferences
  over the original problem).
  In one domain, we additionally show that a new and
  desirable normative property emerges as a result of de-randomization.
\end{abstract}

\newtheorem{mytheorem}{Theorem}
\newtheorem{myconjecture}{Conjecture}
\newcommand{\myOmit}[1]{}
\newcommand{\myblacksquare}{$\blacksquare$}
\newcommand{\mymin}{\mbox{\rm min}}
\newcommand{\mymod}{\ \mbox{\rm mod} \ }
\newcommand{\myendpoint}{\mbox{\sc EndPoint}\xspace}

\section{Introduction}

In 2012, transplant centres across Germany were placed under criminal
investigation
due to the 
manipulation of donor waiting lists.
Dozens of patients within the Eurotransplant system
had preferentially received organs after doctors falsified the
severity of their illnesses. 
The discovery shattered public trust. 
In the next year,
donations dropped by around 30\% 
according to the German organ transplant foundation (DSO)
 \cite{2062}.
Unsurprisingly then, when the author was asked to help update
a multi-national mechanism for organ matching, it was made 
clear that the new mechanism should be simple and trustworthy. Indeed, there
was an explicit requirement that it be
deterministic to enable auditing \cite{mswijcai17,mswaies18,mbb}. 

Unfortunately deterministic mechanisms
can struggle to choose between (perhaps essentially equivalent) alternatives
without impacting desirable normative properties like fairness
or strategy proofness (e.g. when breaking ties or resolving 
Condorcet cycles). 
Randomization offers an escape since we can  fairly choose
between equivalent alternatives randomly. 
For example, no deterministic mechanism 
for house allocation 
is Pareto efficient, strategy proof and anonymous \cite{rsd}.
But randomized mechanisms can have all three properties.
In particular, the random priority mechanism is 
(ex post) Pareto efficient, strategy proof and anonymous.

There are, however, many
problems with randomization.  First, true rather than pseudo
randomness is difficult to obtain. We typically require access to some
external data source to provide a true source of random bits. 
Second, if we are making what is an one-off decision then
it is inherently difficult to demonstrate that the randomized mechanism 
was fair.
Third, even where decision making is repeated, it can be 
difficult to audit fairness. We may, for instance, require many
executions of the mechanism to have high confidence in its
fairness. Fourth, randomization can introduce undesirable computational
complexity. For instance, randomness can make it computationally 
intractable to compute the probability distribution
over outcomes (e.g. \cite{sswine2013}). 

We propose an alternative 
that tackles many of these challenges.
Rather than introduce randomness, 
we stick with deterministic mechanisms but add a 
game where equilibrium behaviour for agents is to 
play randomly. We thereby inject randomness into
a deterministic mechanism through the agents' play.
Surprisingly, we can retain many of the advantages of tossing
a coin while avoiding the disadvantages.
Indeed, de-randomization can even enhance
the normative properties achieved. For instance, we
propose a new peer selection mechanism where
agents always have an incentive to participate,
irrespective of how the other participants vote.

\section{Modular Arithmetic Game}

At the heart of our general purpose method to de-randomize mechanisms
is a simple modular arithmetic game. We first show that equilibrium
behaviour in such games is for two (or more) agents to play randomly. 
The simplest example of such a game is the parity game:
two agents play 0 or 1, the even agent wins if the bits are identical,
while the odd agent wins if they are different. 

\begin{mytheorem}
The parity game has an unique Nash equilibrium in which the even and
odd agent both play uniform random bits. The outcome is even/odd
with equal probability. 
\end{mytheorem}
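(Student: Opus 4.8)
The plan is to treat the parity game as a $2\times 2$ zero-sum (matching-pennies) game and apply the indifference principle directly. First I would fix notation: let $p$ denote the probability the even agent plays $0$ (and $1-p$ the probability it plays $1$), and let $q$ denote the probability the odd agent plays $0$. Since exactly one of the two agents wins on every play, it suffices to track the probability $W(p,q)$ that the even agent wins, i.e.\ the probability that the two bits coincide,
\[
W(p,q) = pq + (1-p)(1-q) = 1 - p - q + 2pq ,
\]
with the odd agent winning with the complementary probability $1-W(p,q)$.

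Next I would read off the best responses by observing that $W$ is affine in each variable separately. Collecting the terms in $p$ gives $W(p,q) = (1-q) + (2q-1)\,p$, so the even agent's best response is $p=1$ when $q > \tfrac12$, $p=0$ when $q < \tfrac12$, and every $p \in [0,1]$ when $q = \tfrac12$. Symmetrically, collecting the terms in $q$ gives $1 - W(p,q) = p + (1-2p)\,q$, so the odd agent's best response is $q=1$ when $p < \tfrac12$, $q=0$ when $p > \tfrac12$, and every $q \in [0,1]$ when $p = \tfrac12$.

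Uniqueness then follows by a short case check. Suppose $(p^\ast,q^\ast)$ is a Nash equilibrium. If $q^\ast \neq \tfrac12$, the even agent's unique best response is pure, say $p^\ast \in \{0,1\}$; but against a pure $p^\ast$ the odd agent's unique best response is itself pure, forcing $q^\ast = 0$ when $p^\ast = 1$ and $q^\ast = 1$ when $p^\ast = 0$, and in either case this value of $q^\ast$ is no longer consistent with $p^\ast$ being the even agent's best response — a contradiction. Hence $q^\ast = \tfrac12$, and then the odd agent is willing to play a non-pure strategy only if its payoff slope $1-2p^\ast$ vanishes, i.e.\ $p^\ast = \tfrac12$. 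Conversely, at $(\tfrac12,\tfrac12)$ both agents are indifferent among all their strategies, so it is a Nash equilibrium, and by the above the only one. Substituting $p = q = \tfrac12$ gives $W(\tfrac12,\tfrac12) = \tfrac14 + \tfrac14 = \tfrac12$, so the outcome is even or odd with equal probability.

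I do not expect a genuine obstacle: this is essentially the textbook matching-pennies analysis. The only point needing a little care is the uniqueness claim, where one must rule out both the pure profiles and the mixed-but-non-uniform profiles; the fact that $W$ is affine in each coordinate with a nonzero coefficient whenever the other player is not already uniform turns this into a finite case check rather than an optimisation argument. If the theorem is also meant to cover the ``(or more)'' agents variant mentioned above, I would add the remark that the same observation propagates: each agent's winning event is affine in that agent's own mixing probability, with nonzero slope unless the remaining agents are already uniform, so uniform randomization by all agents is again the unique equilibrium.
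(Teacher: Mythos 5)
Your proof is correct and follows essentially the same route as the paper: the indifference principle for a $2\times 2$ matching-pennies game, using the fact that each player's payoff is affine in their own mixing probability so that any non-pure best response forces the opponent to mix uniformly. Your version is somewhat more careful than the paper's, since you explicitly dispose of the mixed-versus-pure profiles via the best-response case check rather than only ruling out pure-pure profiles.

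One caution on your closing aside: the claimed extension to the ``(or more)'' agents variant is false. In the general modular arithmetic game the paper shows that it suffices for \emph{two} agents to randomize uniformly (quasi-uniform profiles are equilibria in which the remaining agents may play arbitrarily), and it even exhibits equilibria in which \emph{no} agent is uniform; the slope of an agent's payoff in their own mixing probabilities vanishes as soon as any one other agent is uniform, so uniqueness does not propagate beyond the two-player parity game.
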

\begin{proof}
No pure strategy is an equilibrium since one agent must lose and
therefore can profit by deviating and inverting their strategy from odd
to even, or even to odd. 
To show uniqueness of an uniform random strategy,
suppose that the even agent plays odd with probability $p$ and even with
probability $1-p$, and that the
odd agent gets utility $u$ for a win and $v$ for a loss ($u > v$).  To
be an equilibrium,
the odd agent must get the same reward whether they play odd
(expected reward of $(1-p)u + pv$) or even (expected reward of $pu + (1-p)v$)
Therefore $(u-v)(1-2p) = 0$. That is, $p = \sfrac{1}{2}$.
Hence the even agent plays an uniform mixed strategy.
A dual argument holds for the odd agent. 
\end{proof}

In the more general modular arithmetic game, $n$ agents
pick an integer $[0,m)$ where $m \geq n$, and the output is the sum
of these integers mod $m$.
First, we show that if one agent plays an uniform mixed
strategy, picking an integer uniformly at random, then the output of
the game is also an uniform random integer irrespective of the other
agents' strategies. 

\begin{mytheorem}
  If one (or more) agent plays an uniform mixed strategy then the
  outcome of the modular arithmetic game is an uniform random integer 
  irrespective of how the other agents play. 
\end{mytheorem}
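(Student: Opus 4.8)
The plan is to exploit the fact that, modulo $m$, translation by any fixed residue is a bijection of the set $\{0,1,\dots,m-1\}$ onto itself, and hence carries the uniform distribution to the uniform distribution. Concretely, suppose agent $i$ plays the uniform mixed strategy, drawing $X_i$ uniformly from $\{0,1,\dots,m-1\}$, and let $S = \bigl(\sum_{j \neq i} X_j\bigr) \bmod m$ be the (possibly random) sum of all the other agents' choices. Since moves are simultaneous, the other agents' choices cannot depend on the realised value of $X_i$; hence $S$ is independent of $X_i$, whatever (possibly randomised, possibly correlated) strategies the others use. The output of the game is $O = (X_i + S) \bmod m$.

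First I would condition on the event $S = s$ for an arbitrary $s \in \{0,\dots,m-1\}$. By independence, $X_i$ remains uniform given this event, and since $x \mapsto (x + s) \bmod m$ is a bijection of $\{0,\dots,m-1\}$, the conditional law of $O$ given $S = s$ is uniform. Next I would integrate out the conditioning by the law of total probability: for every $k \in \{0,\dots,m-1\}$,
$$\Pr[O = k] = \sum_{s} \Pr[S = s]\,\Pr[O = k \mid S = s] = \sum_s \Pr[S = s]\cdot \frac{1}{m} = \frac{1}{m}.$$
Hence $O$ is a uniform random integer in $[0,m)$. The case in which several agents randomise uniformly follows at once by singling out any one of them and absorbing the remaining agents into $S$.

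There is essentially no obstacle here; the only two points worth stating carefully are (i) that the other agents' joint strategy, however adversarial or correlated, is statistically independent of the randomising agent's draw precisely because the game is simultaneous-move, and (ii) that a mixture of uniform distributions (here, over the conditioned value $s$) is again uniform. I would also remark that the hypothesis $m \geq n$ plays no role in this particular statement; it is presumably needed later to guarantee that such uniform play is actually an equilibrium, or that the induced outcome distribution has the required support.
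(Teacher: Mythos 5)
Your proof is correct and is essentially the same argument as the paper's: the paper writes the outcome distribution as the discrete convolution $\sum_i p(i)\,q((j-i) \bmod m)$ with $p$ uniform and factors out $\sfrac{1}{m}$, which is exactly your conditioning-plus-total-probability computation in different notation. Your explicit remarks on independence and on the irrelevance of $m \geq n$ are accurate but do not change the substance.
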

\begin{proof}
Let $p(j)$ be the probability that agent 1  picks integer $j$,
and $q(j)$ be the probability that agents 2 to $m$ pick integers that sum
modular $m$ to $j$. Note that $\sum_{i=0}^{m-1} p(i) = \sum_{j=0}^{m-1} q(j) =
1$.
Suppose that $p(j) = \sfrac{1}{m}$. 
The probability that the output of the game is $j$ is
$\sum_{i=0}^{m-1} p(i) q( (j-i) \mymod m)
= \sum_{i=0}^{m-1} \sfrac{1}{m} \  q((j-i) \mymod m)
= \sfrac{1}{m}
\sum_0^{m-1} q((j-i) \mymod m) = \sfrac{1}{m}
\sum_0^{m-1} q(i) = \sfrac{1}{m}$. 
\end{proof}

In modular space, the discrete convolution of any uniform probability distribution
with any other probability distribution is the uniform probability distribution.
Hence, provided one agent plays an uniform random integer, the output
of the modular arithmetic game is itself an uniform random integer.

But are humans any
good at playing randomly? Recent empirical
studies of rock-paper-scissor games 
are hopeful, suggesting that humans can be as random as pseudo-random algorithms
\cite{rsg}:
\begin{quote}
{\em `` \ldots our results demonstrate that human RSG [random sequence generation]
can reach levels statistically indistinguishable from computer
pseudo-random generators in a competitive-game setting \ldots''
}
\end{quote}

We next show that only two
agents need to be playing an uniform random integer for play to be in equilibrium.
We assume that each agent in the modular arithmetic game
has a different most preferred outcome. If each agent
has the same preferred outcome, there are trivial pure Nash equilibria
(e.g. one agent plays the preferred outcome, and all other agents
play 0). On the other hand, if agents have different preferred outcomes, then no pure
strategy is an equilibrium. There are, however, mixed Nash equilibria
with a simple structure. A {\bf quasi-uniform} strategy is a mixed strategy in which
two (or more) agents pick an integer in $[0,m)$ uniformly at random,
and other agents play any mixed or pure strategy. 

\begin{mytheorem}
Any quasi-uniform strategy is a Nash equilibrium which
outputs an uniform random integer in $[0,m)$. 
\end{mytheorem}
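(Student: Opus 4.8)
The plan is to establish two separate claims: first, that under any quasi-uniform strategy profile the outcome is a uniform random integer in $[0,m)$; and second, that no agent can profitably deviate. The first claim follows immediately from Theorem 2: a quasi-uniform strategy has at least one agent (in fact two) playing a uniform mixed strategy, so the sum modulo $m$ is uniform regardless of what everyone else does. This part is essentially free.

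For the equilibrium verification, the key observation is that the outcome distribution is the \emph{same} uniform distribution no matter what any single agent does. First I would fix an arbitrary agent $i$ and consider an arbitrary deviation to some mixed strategy $\sigma_i'$. Because the profile is quasi-uniform, there are at least two agents playing uniform; hence even after agent $i$ deviates, there is still at least one uniform player among the remaining agents (if $i$ was one of the two uniform players, the other uniform player remains; if $i$ was not, both remain). So by Theorem 2 the outcome is still uniform in $[0,m)$, and agent $i$'s expected utility is $\frac{1}{m}\sum_{j=0}^{m-1} u_i(j)$, which does not depend on $\sigma_i'$ at all. Therefore no deviation changes agent $i$'s payoff, so in particular no deviation is strictly profitable, and the profile is a Nash equilibrium. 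This argument works uniformly for every agent, including the non-uniform players.

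The subtlety — and the only place one must be slightly careful — is the counting argument about how many uniform players survive a unilateral deviation: the definition of quasi-uniform requires \emph{two or more} uniform players precisely so that a single deviation cannot destroy the uniformity of the outcome. I would state this explicitly, since it is exactly why ``two'' (rather than ``one'') appears in the definition and why the theorem is true. I expect no real obstacle here; the whole proof is a short deduction from Theorem 2 once this robustness-to-one-deviation point is made clear. A closing remark worth including is that this equilibrium is typically not unique (unlike the parity game of Theorem 1), since many choices of the ``other'' agents' strategies yield distinct equilibria — the theorem only asserts that each such profile is an equilibrium.
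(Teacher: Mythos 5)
Your proposal is correct and follows essentially the same route as the paper's proof: the paper likewise observes that after any single deviation at least one uniform player remains, so by the earlier convolution result the outcome stays uniform and the deviator's payoff is unchanged. Your added emphasis on why ``two or more'' uniform players is exactly the right threshold, and the remark on non-uniqueness, are sound elaborations but not a different argument.
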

\begin{proof}
Suppose agents play a quasi-uniform strategy.  If one agent deviates,
then there is still one other agent playing an uniform mixed
strategy. The outcome therefore remains an uniform random integer, and
the agent's reward is unchanged despite the deviation.
Hence any quasi-uniform strategy is an equilibrium. 
\end{proof}

There are also mixed Nash equilibria that are not
quasi-uniform. Indeed, they are mixed Nash equilibria in which
no agent plays an uniform random integer.

\begin{example}
Consider $n=m=4$. Suppose two agents play 0 or 1 with probability
$\sfrac{1}{2}$,
and the other two agents play 0 or 2 also with probability
$\sfrac{1}{2}$.
Consider one of the agents playing 0 or 1, and another playing 0 or
2. The sum of their plays is an uniform random integer in $[0,3)$.
Hence the sum played by any three of these
agents mod $4$ is an uniform random integer
in $[0,3)$.
Thus any fourth agent has the same reward irrespective of the integer that
they play. This is therefore a mixed Nash equilibrium returning an outcome
that is an uniform random integer in $[0,3)$.
 \end{example}

 Equilibria like those in the last example
 require significant coordination between agents. It is
likely more realistic to suppose that agents will play an uniform or
quasi-uniform strategy as this requires no or limited coordination.

\section{Two Simple Examples}

We illustrate our idea of de-randomizing mechanisms using a modular
arithmetic game with two simple
and classic problems in social choice: voting and facility location.

\subsection{Random Dictator}

We first consider one of the most fundamental problems in social choice,
designing mechanisms for voting with good normative problems.
As is well known, we quickly run into a wide range of
impossibility results. For instance,  with three or more candidates,
any voting rule that is surjective and strategy proof must also be
dictatorial \cite{gs1,gs2}. 
One escape from such impossibility results
is randomization. For example, the random dictator mechanism
is surjective and strategy proof but the result is not decided by just
one voter (i.e. it is not dictatorial). Indeed, it is the only
rule that is strategy proof and ex post efficient (i.e. never gives
positive
probability on Pareto-dominated alternatives) \cite{gibbard77}.
However, voters may not be too keen to see
this mechanism being used. The chair might 
tell a voter: ``You preferred candidate lost because
some other voter was randomly chosen as dictator'', and
the voter might have little option but to trust that the chair was
fair. 

We propose instead the following de-randomization of the
random dictator mechanism.
Agents submit an integer in $[0,n)$, along with their preferred
winner. Let $j$ be the sum of these integers mod $n$.
The dictator is chosen to be the preferred winner of the $j+1$th voter. 
This de-randomization of the random dictator
mechanism is not strategy proof. Any voter can ensure
that they are the dictator by choosing a suitable integer. 
However, there is a mixed strategy Nash equilibrium 
in which two (or more) agents choose an integer 
uniformly at random, as well as all agents report sincerely their
most preferred candidate to win. 
A nice feature of this de-randomization (which we will observe in
almost every example explored in this paper) is that an agent's ability to
manipulate the outcome is limited to this modular arithmetic game. It
is in their best interests to declare their preferences over outcomes
(in this case, their preferred winner) 
sincerely.

\subsection{Left-Right-Middle Mechanism}

We turn next to the classic problem of facility location, and show
how we can return better, higher quality solutions using
de-randomization. When locating a single facility on the line, no
deterministic and strategy proof mechanism can do
better than 2-approximate the maximum cost an
agent travels \cite{ptacmtec2013}. However, the
randomized left-right-middle (LRM) mechanism
3/2-approximates the maximum cost, and this
is optimal as no randomized and strategy proof mechanism
can do better. 
The LRM mechanism
selects the leftmost agent with probability 1/4,
the midpoint between leftmost and rightmost agents with
probability 1/2, and the rightmost agent again with
probability 1/4.

We propose the following 
de-randomization of the LRM mechanism.
Agents submit an integer between 0 and 3, along with their location. 
If the sum of the integers modulo 4 is 0 then
the facility is located at the leftmost agent.
If the sum of the integers modulo 4 is 1 or 2 then
the facility is located at the midpoint between the leftmost and
rightmost agents. 
Otherwise, the sum of the integers modulo 4 is 3 and 
the facility is located at the rightmost agent. 

This de-randomized facility location
mechanism is not strategy proof. Suppose, for example, that
I am the rightmost agent and I know the other reports. I can ensure the facility is 
located at my location by submitting a suitable integer. However, 
there is a mixed strategy Nash equilibrium 
in which agents choose an integer between 0 and 3 
uniformly at random, as well as reporting their sincere location. The 
expected maximum cost of this mixed
strategy is 3/2 times the optimal maximum cost.
This is better than that obtained by the best deterministic and
strategy proof mechanism. 

These two simple examples have illustrated some of the basic ideas in
de-randomizing randomized mechanisms. We now
apply de-randomization to four other domains where the analysis
is more complex. 
These examples uncover three different but general purpose methods to
de-randomize randomized
mechanisms. In the first (``game-first''), we play a modular arithmetic
game to pick a random ``seed''. This is then applied to the original randomized mechanism.
In the second (``game-last)'', we apply a randomized mechanism to generate a
probabilistic outcome. We then play a modular arithmetic game to convert this into a
discrete ex post outcome. And in the third (``game-interleaved''),
we interleave playing
a modular arithmetic game with applying the mechanism.

\section{Task Allocation}

The first more complex domain that we consider is task allocation.
There are $m$ tasks that need to be allocated to 2 agents.
Agent $i$ declares that task $j$ will take time $t^i_j$. The goal
is to allocate tasks to agents to
minimize the makespan (the completion time of the last task
to either agent). To compensate agents for performing a task,
agents receive a payment. The payment is in hours of work, and
the overall utility of an agent is this payment less the actual amount
of work performed. The mechanism design problem here is
to devise a mechanism which approximates well the
optimal makespan and incentivizes agents to report the time it will
take them to execute each task sincerely.

Nisan and Ronen  \shortcite{amd} prove that no deterministic mechanism can
better than 2-approximate the makespan, and that a simple VGC-style
min-work mechanism that allocates each task to the quickest agent,
paying them the time that would be taken by the other agent to perform
the task is strategy proof and
achieves this 2-approximation.
They also prove that randomization can improve this approximation
ratio in expectation. In particular, they show that the biased
min-work mechanism (Algorithm 1) 
provides a 7/4-approximation of the optimal makespan in expectation
when given $m$ random bits drawn uniformly.
This mechanism is strongly truthful (i.e. even if we know the
random bits, it remains strategy proof). We denote these random bits
by $b_j$.

\begin{algorithm}[tb]
  \caption{BiasedMinWork($m,t^i_j,b_j$)}
\begin{algorithmic}[1] 
        \STATE $a_1, a_2 := \{\}$ \mbox{\it \ \ \ \ \ \ \ \ \ \ \ \
          \ \ \ \ \ \ \ \ \ \ \ \ \ \ \ \ \ \ \ \ \ \ \ \ ;
          task allocation}
     \STATE $p_1, p _2 :=  0$ \mbox{\it \ \ \ \ \ \ \ \ \ \ \ \
          \ \ \ \ \ \ \ \ \ \ \ \ \ \ \ \ \ \ \ \ \ \ \ \ \ \  ;
          payment}

            \FOR{$j=1$ to $m$}

    \STATE $i := 1+ b_j, i' := 3-i$   \mbox{\it \ \ \ \ \ \
              \ \ \ \ \ \ \ \  ;
               construct permutation}

            \IF {$t^i_j \leq \frac{4}{3} t^{i'}_j$ }
    \STATE  $a_i := a_i \cup \{j\}, p_i :=  p_i + \frac{4}{3} t^{i'}_j$
    \ELSE
    \STATE $a_{i'} := a_{i'} \cup \{j\}, p_{i'}  :=  p_{i'} + \frac{3}{4} t^{i}_j$
        \ENDIF
\ENDFOR
        \STATE \textbf{return} $a_1, a_2, p_1, p_2$
    \end{algorithmic}
 \end{algorithm}

To de-randomize this biased min-work mechanism, we
have agents set the bits $b_j$ by means of a simple parity game. 
In particular, we suppose the mechanism operates
in $m$ rounds. In the $j$th round, both agents submit their time to
complete task $j$. We additionally suppose they now submit a
single bit, 0 or 1.  We set $b_j$ to be the
xor of these two bits. This example thus illustrates the
``game-interleaved'' method
to de-randomize a mechanism. 

\begin{mytheorem}
  This de-randomized biased min-work mechanism has an unique
  mixed subgame perfect Nash
equilibrium in which agents submit bits uniformly at random, 
and sincerely report their task times. This $\frac{7}{4}$-approximates
the optimal makespan in expectation. 
\end{mytheorem}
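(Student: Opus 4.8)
The plan is to view the $m$-round mechanism as an extensive-form game and characterise its subgame perfect equilibria by backward induction, a subgame starting whenever a fresh round begins (after the reports and xor-bits of the earlier rounds have been observed). The core of the argument is a downward induction on the round index $j$ establishing: in every subgame perfect equilibrium both agents report $t^i_j$ sincerely and submit a uniform random bit in round $j$ (and no pure equilibrium exists, for the reason given in the proof of Theorem~1). Assuming this for rounds $j+1,\dots,m$, the value each player carries into round $j$ from the future is a constant, since future reports are sincere and future bits uniform no matter what happens in round $j$.

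I would then analyse round $j$ itself. Because Algorithm~1's allocation and payment for task $j$ depend only on $t^1_j$, $t^2_j$ and $b_j$, and makespan and payments are additive over tasks, in round $j$ each agent simply maximises its task-$j$ contribution. The biased min-work rule is, as Nisan and Ronen observe, strategy proof even when the bits are known (``strongly truthful''), so applied to task $j$ this forces sincere reporting of $t^i_j$ whatever $b_j$ is and whatever the other agent reports. With sincere reports in place, a one-line computation from Algorithm~1 gives agent~$1$'s task-$j$ utility as $\max\{0,\tfrac{4}{3}t^2_j-t^1_j\}$ when $b_j=0$ and $\max\{0,\tfrac{3}{4}t^2_j-t^1_j\}$ when $b_j=1$, and symmetrically for agent~$2$; hence agent~$1$ weakly prefers $b_j=0$ and agent~$2$ weakly prefers $b_j=1$, both strictly in the generic case where neither agent is at least $\tfrac{4}{3}$ times slower than the other on task $j$. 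Since $b_j$ is the xor of the two submitted bits, the residual bit subgame is precisely the parity game, so Theorem~1 yields its unique equilibrium: both agents play uniform random bits and $b_j$ is uniform. This establishes the claim for round $j$ and closes the induction (the base case $j=m$ is the same argument with an empty future).

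Assembling the rounds, the unique mixed subgame perfect equilibrium is the profile in which every agent always reports sincerely and submits an independent uniform random bit in every round; consequently $b_1,\dots,b_m$ are i.i.d.\ uniform bits and all reported times are the true times. The induced distribution over allocations, hence over makespans evaluated at the true times, then coincides exactly with that of Algorithm~1 run on $m$ uniform random bits, so by the Nisan--Ronen guarantee quoted earlier the mechanism $\tfrac{7}{4}$-approximates the optimal makespan in expectation.

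The main obstacle is the cross-round bookkeeping rather than any single computation: I must verify that an agent cannot gain by misreporting a task time in an early round in order to shift play in a later round --- which is exactly what the backward induction rules out, since later play is a parity game insensitive to the history --- and I must handle the non-generic task-time profiles in which one agent is indifferent to the parity bit, where extra (non-uniform) equilibria appear and a tie-breaking or genericity assumption is needed to preserve uniqueness. Everything else reduces to applying Theorem~1 round by round on top of the strong truthfulness of biased min-work.
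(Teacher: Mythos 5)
Your proposal is correct and follows essentially the same route as the paper: reduce each round's bit choice to the parity game by showing agent 1 prefers $b_j=0$ and agent 2 prefers $b_j=1$, invoke the strong truthfulness of biased min-work to force sincere time reports, and observe that i.i.d.\ uniform bits reproduce the distribution of Algorithm~1 on random bits and hence its $\frac{7}{4}$ guarantee. Your closed forms $\max\{0,\frac{4}{3}t^{2}_j-t^{1}_j\}$ versus $\max\{0,\frac{3}{4}t^{2}_j-t^{1}_j\}$ are a cleaner packaging of the paper's two-case/two-subcase analysis, and the explicit backward-induction scaffolding (constant continuation values because equilibrium future play is history-independent) is left implicit in the paper. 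Most notably, your caveat about non-generic profiles identifies a real gap in the paper's own argument: when $t^{i}_j \geq \frac{4}{3}t^{i'}_j$, agent $i$ receives zero in round $j$ regardless of $b_j$ and is therefore indifferent (the paper's claim that ``it was desirable for both agents to have won this round'' holds only for agent $i'$ there), so Theorem~1's uniqueness hypothesis $u>v$ fails and additional, non-uniform subgame perfect equilibria exist in that round; the stated uniqueness thus requires exactly the genericity condition you name, although the $\frac{7}{4}$ approximation is unaffected since in those degenerate rounds the allocation (as opposed to the payment) does not depend on $b_j$.
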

\begin{proof}
Consider agent 1 in  round $j$. We say that agent 1 wins
the round iff $b_j = 0$. There are two cases.
In the first, $t^1_j \leq \frac{4}{3} t^{2}_j$.
If agent 1 wins this round (that is, $b_j = 0$), then
agent 1 pays a cost of $t^1_j$ but receives a greater or equal
payment of $\frac{4}{3} t^2_j$.
Agent 2, on the other hand, loses this round,
incurs no cost but receives no
payment. 
Suppose agent 2 instead wins the round (that is, $b_j=1$).
There are two subcases. In the first subcase,
$\frac{3}{4} t^1_j \leq t^2_j \leq \frac{4}{3} t^{1}_j$. Agent 1 is
now not allocated the task and agent 1 therefore has no additional
cost or payment.
However, agent 2 is allocated the task. Agent 2 receives a payment
of $\frac{4}{3} t^1_j$ which is greater than their cost.
Hence, in the first subcase, both agent 1 and 2  want to win the round. 
In the second subcase, 
$t^2_j > \frac{4}{3} t^{1}_j $ and agent 1 is allocated task $j$.
Their payment is $\frac{3}{4} t^2_j$. This is strictly greater than their cost,
$t^i_j$.  However, it is a smaller payment than when agent 1 wins the
round. Hence, it was desirable for both agents to have won this round.
The other case, in which $t^1_j  > \frac{4}{3} t^{2}_j$ is dual.
The mechanism treats agents identically so agent 2 
also wants to win each round. 
It follows that the unique mixed subgame perfect Nash equilibrium has
both agents submitting bits uniformly at random. The biased min-work
mechanism is strongly truthful as, even if the result of the parity game
is known, agents  have no incentive to misreport their task time.
As the subgame perfect Nash equilibrium has agents winning
each round with equal probability, the mechanism returns the same
distribution of task allocations
as the randomized mechanism. Hence, it 
$\frac{7}{4}$-approximates the optimal makespan in expectation. 
\end{proof}

\section{Peer Selection}

We consider next the peer selection problem in which agents
have to choose one amongst themselves to receive a prize (e.g.
\cite{selectingselectors,nsf1,spsmae,spsrpa,wecai2014}).
For example, a committee might want to choose one person
within the committee to be chair.
As a second example, a school class might want to choose a class
representative.
As a third example, the members of an academic society might want
to choose a member to receive a medal.
We propose here a new peer selection mechanism with a novel
normative property encouraging participation: each one of the agents is guaranteed
to have a say in the winner. That is, irrespective of how
the other agents vote, every agent can change the winner of the prize.
This peer selection
mechanism is based on the idea of sequentially eliminating
agents until just one is left who wins the prize.
Our results
\myOmit{ about
this sequential elimination mechanism}
easily generalize
\myOmit{from the peer selection setting where
  $n$ agents vote over (the same) $n$ candidates}
to the more general voting
setting where $n$ agents vote over $m$ candidates. 

We first consider a deterministic peer selection mechanism in
which agents successively eliminate candidates. This mechanism
lacks the desirable normative property of anonymity (i.e. permuting
the names of the agents changes the outcome). 
We therefore randomize it to give anonymity. We will then de-randomized
this mechanism to give a new deterministic and anonymous
mechanism. This de-randomized mechanism has
the property that, even if some agent has a large majority support, a
dissenting agent still has an incentive to participate and change the
outcome. This example
is an instance of the ``game-first'' method for de-randomizing a randomized
mechanism in which we play a modular arithmetic game first to select a random
``seed''. 

\subsection{Sequential Elimination}

We start by considering the strategy behaviour of
agents with the simple deterministic sequential elimination (SE)
mechanism
studied by Moulin \shortcite{moulinbook} and others.
We will need this equilibrium result to discuss the equilibria of the de-randomized
mechanism that we will shortly introduce. The SE
mechanism starts with all agents as possible winners,
and then has the agents in a given order
eliminate one of the remaining agents from the set of possible
winners until only a single agent remains. 
We suppose the $n$ agents have strict preferences
so there is, for example, always an unique least preferred agent to
eliminate.
This SE mechanism is in some sense a dual of the
dictatorship mechanism.
Whilst the dictatorship mechanism is strategy proof, sequential elimination is not.
A agent may not eliminate their worst ranked agent
if a later agent will. Strategic play is, however, easily computed.

Given a fixed elimination ordering,
we can view the SE mechanism as a
repeated game. Supposing agents have complete information about
the elimination ordering and other preferences,
the subgame perfect Nash equilibrium of
this game can be computed using backward induction. 
However, such a computation is exponential in $m$. 
There is, however, a simple linear time method to compute
the subgame perfect Nash equilibrium. We simply reverse the
elimination ordering and play the game backwards
\cite{moulinbook}. This is 
similar to equilibria of the sequential
allocation mechanism itself
\cite{knwxaaai13,waaai16,agwadt2017}.

It is easy to see informally why this is the case. 
Agents can play strategically to ensure that the last agent
to be eliminated is the least preferred agent of the
last agent in the elimination ordering.
The last agent will surely want this agent to be eliminated.
Therefore earlier agents in the elimination order
might strategically not eliminate this agent, even if it
is also their least preferred agent. An early agent
will profit by eliminating some more preferred agent, safe
in the knowledge that their least preferred agent will be
eliminated at the end of the game. 
Similarly, if we
discount this agent, the agents can play strategically to
ensure that the penultimate agent
to be eliminated is the least preferred agent of the
penultimate agent in the elimination ordering, and so on. 
Indeed, all subgame perfect Nash equilibria return the same
winner, the one computed by this reversed computation.

\myOmit{
\begin{mytheorem}
  All subgame perfect Nash equilibria of the SE mechanism
game given the picking ordering $\pi$ of 
agents have the same winner, and this is
found in $O(m)$ time by computing $elim(rev(\pi))$ where $rev$
reverses the elimination ordering, and 
$elim$ has the agents in turn eliminating their least preferred
remaining agent.
\end{mytheorem}
\begin{proof}
We prove a strictly more general result considering the game where we eliminate $k$
agents to leave a set of $m-k$ winners. To compare outcomes with multiple
winners (i.e. $m-k>1$),
we suppose agents have additive utilities over sets of agents. 
To show that the winners of the subgame perfect Nash equilibrium
of this SE game are computed by
$elim(rev(\pi))$, we use induction on the number of  elimination steps
$k$. In the base case, $k=1$ and the 
optimal play for the single agent
that eliminates a single agent is computed by $elim(rev(\pi))$. 
In the step case, we suppose the induction hypothesis holds
for $k$ ($k \geq 1$) elimination steps and demonstrate that it holds
for $k+1$ such steps.
Consider the last agent eliminated in the final $k+1$th step.
Suppose the last agent eliminated
is not the least preferred agent of the agent
eliminating agents at this step. There are two cases.
In the first case, the least preferred agent survives to be in the
winning set. The final step
would be dominated by eliminating this least 
preferred agent contradicting this being subgame perfect.
In the second case, the least preferred agent
has already been eliminated before the final step.
We can then swap the final elimination with the elimination
of this least preferred agent and construct an equivalent
game in which the final elimination is the least preferred agent
of this agent. We then appeal to the induction hypothesis. 

To demonstrate that the winner is unique, suppose there 
are two subgame perfect Nash equilibria with
different winners. Consider the agents in common
eliminated in both games. Without loss of generality,
we permute the eliminations in one game so that these are
in the same order as in the second game. Consider next the first agent in the
first game that eliminates the winner of the second 
game. Recall that we assume agents have
strict preferences so this agent will strictly prefer
one of these two different outcomes. But this means
one of these eliminations is dominated which contradicts that
this is a subgame perfect Nash equilibrium. 
Note that there can
be multiple subgame perfect Nash equilibria
that give the same winner.
For example, if in one equilibrium,
agent 1 eliminates 
agent $a$, and agent 2 eliminates agent
$b$, and both agents 1 and 2 have agents $b$
and then $a$ at the end of their preference ordering,
then there is another equilibrium with the
same winner in
which agent 1 eliminates agent $b$ and
agent 2 eliminates agent $a$. 
\end{proof}

Note that this result about computing the subgame perfect
Nash equilibrium of the SE mechanism also
holds for the more general voting game in which the number of
candidates
may be different to the number of voters, and not just to the peer
selection game in which it is equal. 
We also contrast this result
with the unique subgame perfect Nash
equilibrium of the sequential allocation mechanism in the
fair division game with indivisible items \cite{knwxaaai13,kcor71}.
With two agents, this can also be computed by reversing the picking sequence.
However, the computation of the subgame perfect Nash
equilibrium in the fair division game also reverses the preferences so that
agents are allocated their least preferred unallocated items
remaining rather than their most
preferred. In this fair division game,
agents strategically pick to ensure that
the last remaining item is the last agent's least preferred item. 
And if we discount this item, the last but
one remaining item is that agent's least preferred item, and so on.
}

\subsection{Random Sequential Elimination}

The SE mechanism that we just considered
is not anonymous. It treats
agents differently according to where they appear in the elimination
ordering. We can make it anonymous by randomizing over the agents.
In particular, we propose the random sequential elimination (RSE) mechanism
which picks a random order of agents and then has agents eliminate
agents one by one using this 
ordering until just one
agent remains who is declared the winner. 

We can de-randomize RSE using a ``game-first'' method. 
In the first stage, the $n$ agents play a modular arithmetic game in which 
each submits an integer in $[0,n!)$. We use the result of this game to pick 
one of the $n!$ priority orderings of agents
via an inverse Lehmer code. In the second stage, 
we eliminate agents using this ordering. 
With this
de-randomized RSE mechanism, agents have only limited options for strategic
play. In particular, 
the voting game for the de-randomized mechanism has a mixed subgame
perfect Nash equilibrium in which agents pick random integers
uniformly,  and then eliminate their least preferred remaining
agent in reverse order.

\begin{mytheorem}
  The voting game of the de-randomized RSE mechanism has a mixed subgame perfect Nash
  equilibrium in which two or more agents pick numbers uniformly at random in the
  first stage, and then agents sincerely eliminate their least preferred agent remaining
  in the reverse elimination order in the second stage. 
\end{mytheorem}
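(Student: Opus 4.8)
The plan is to treat the de-randomized RSE mechanism as a two-stage game and assemble the equilibrium from two ingredients already in hand: the equilibrium analysis of the plain SE mechanism for a fixed elimination ordering, and the equilibrium analysis of the modular arithmetic game. First I would fix the continuation play in the second stage. For any priority ordering $\pi$ that the first stage might output, the induced second-stage subgame is precisely the SE game with elimination ordering $\pi$; by the analysis of SE above, it has a subgame perfect Nash equilibrium whose path of play coincides with reversed greedy elimination (each agent, taken in the reversed order $rev(\pi)$, removing their least preferred remaining candidate), and every subgame perfect equilibrium of this subgame selects the same winner, call it $w(\pi)$. Since this holds for \emph{every} $\pi$, it specifies the continuation on and off the first-stage equilibrium path, which is what subgame perfection in the first stage requires.

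Next I would analyse the first stage. The inverse Lehmer code is a bijection between $[0,n!)$ and the $n!$ priority orderings, so the distribution over $\pi$ is exactly the image under this bijection of the distribution of the modular sum; in particular a uniform integer yields a uniformly random ordering. Fixing the second-stage continuation above, agent $i$'s payoff as a function of the first-stage outcome $s$ is $u_i(w(\pi(s)))$, so the first stage is an instance of the modular arithmetic game on $[0,n!)$ with these induced preferences over outcomes. Consider the profile in which two (or more) designated agents play the uniform strategy on $[0,n!)$ while the rest play arbitrarily --- a quasi-uniform strategy. By the theorem that any quasi-uniform strategy is a Nash equilibrium, no agent can profitably deviate in the first stage: if one of the uniform players deviates there is still a uniform player left, so by the theorem that a single uniform player forces a uniform output the sum mod $n!$ remains uniform, hence $\pi$ remains uniformly distributed, hence the deviator's expected continuation payoff $E_\pi[u_i(w(\pi))]$ is unchanged; and a deviation by any non-uniform agent likewise leaves $\pi$ uniform. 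Since the first stage has no proper subgames, Nash here is already subgame perfect at that level.

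Combining the two parts yields the claimed mixed subgame perfect Nash equilibrium: the designated agents randomize uniformly in stage one, and in every second-stage subgame the agents play the SE equilibrium (reversed greedy elimination taken in the reverse order); by the two arguments above there is no profitable one-shot deviation at any node, and the realized winner is distributed as $w(\pi)$ for uniform $\pi$ --- the same distribution RSE would produce. The step requiring the most care, and the one I would flag as the main obstacle, is the first-stage invariance argument: agents have heterogeneous and possibly non-strict preferences over orderings (several orderings can induce the same winner), so one cannot argue via a ``unique best response''; instead the point is that no single agent can perturb the distribution of $\pi$ away from uniform, so in the first stage \emph{every} action profile with two uniform players is a best response for each uniform player, which is exactly what the quasi-uniform equilibrium theorem delivers. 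A minor additional check is that the reversed-greedy SE continuation is well defined for all $\pi$ (it is, since preferences over candidates are strict), so that the off-path subgames are handled uniformly.
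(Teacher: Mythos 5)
Your proposal is correct and follows essentially the same route as the paper's (omitted) argument: fix the second-stage continuation as the reversed-greedy subgame perfect equilibrium of the SE game for every ordering $\pi$, then observe via the quasi-uniform equilibrium theorem that with two or more uniform players no first-stage deviation can shift the distribution of $\pi$ away from uniform, so no agent's expected continuation payoff changes. Your explicit handling of off-path subgames and of the fact that several orderings may induce the same winner is a welcome tightening of the paper's brief sketch, but it is not a different method.
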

\myOmit{
  \begin{proof}
Suppose all but one agent plays this uniform mixed strategy. 
Then, in the first stage, a agent gets the same expected return irrespective of  
the integer that they play. Note that, once the first stage has
been played, the elimination order for agents is fixed
and the subgame perfect Nash equilibrium for the second stage
is to pick the least preferred remaining agent with the reverse
elimination order. 
\end{proof}
}

A novel property of this mechanism is that agents always have an incentive
to participate irrespective of the votes of the other agents.
We say that a peer selection mechanism over two or more
agents is {\bf responsive} iff, regardless of the votes of the other agents, 
there exist two different ways that any agent can vote with
different outcomes. Note that the SE mechanism is not responsive
as the last agent in the elimination order cannot change the outcome.
The de-randomized RSE mechanism is, on the other hand, responsive. 

\begin{mytheorem}
  The de-randomized RSE mechanism is responsive. 
\end{mytheorem}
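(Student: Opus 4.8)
The plan is to show that for any agent $i$ and any fixed profile of the other agents' votes, agent $i$ can produce two different winners by voting in two different ways. The key observation is that agent $i$ controls two distinct things: the integer it submits in the first stage (which, together with the other integers, selects a priority ordering $\pi$ via the inverse Lehmer code), and the sequence of eliminations it performs whenever its turn comes up in the second stage. I would argue that manipulating the first stage alone already suffices.

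First I would note that, with the other agents' first-stage integers fixed, as agent $i$ ranges over all values in $[0,n!)$, the sum mod $n!$ ranges over all of $[0,n!)$, so agent $i$ can force \emph{any} of the $n!$ priority orderings. In particular, agent $i$ can choose to be the \emph{first} agent in the elimination order, or to be the \emph{last} agent in the elimination order. Second, I would fix the other agents' second-stage strategies and compare the winner under ``agent $i$ goes first'' versus ``agent $i$ goes last.'' When agent $i$ is first, it gets to eliminate one candidate of its choosing before anyone else moves; when agent $i$ is last (and in particular the order of the remaining agents is held the same by a suitable choice of Lehmer code), the play of the other agents up to agent $i$'s single remaining move is identical in the two scenarios \emph{except} for whichever agent $i$ removed at the front. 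The hard part will be making this comparison clean: I need to exhibit a concrete pair of first-stage deviations for agent $i$ such that the resulting games provably have different winners, and to handle the fact that the other agents' responses are functions of the whole history, so changing agent $i$'s position perturbs everything downstream.

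To get around this I would use a simpler and more robust argument: take any ordering $\pi$ in which agent $i$ is \emph{last}. In the second stage of that game, all eliminations before agent $i$'s final move are determined by the other agents and the ordering; let $X$ be the set of two agents surviving just before agent $i$'s move (one of which is $i$ itself if $i$ has not yet been eliminated — but agent $i$ being last, and assuming agent $i$ never eliminates itself, $i \in X$). Then agent $i$'s single eliminating move picks the winner from $X\setminus\{i\}$ versus... no — with $i$ last we need $i$ not yet eliminated, which is not guaranteed. So instead I would pick $\pi$ so that agent $i$ is \emph{first}: agent $i$'s opening move removes one candidate $c$, and I claim there are two choices $c\neq c'$ (possible since $n\ge 2$ gives at least two other agents, hence at least two removable candidates, whenever $n\ge 3$; the $n=2$ case is immediate as the first mover names the winner) leading to different winners. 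To finish this I would invoke the earlier characterization of subgame-perfect play: removing $c$ versus $c'$ at the root gives two SE subgames on candidate sets differing in one element, and by the reversed-ordering computation of the SE winner, one can choose $c$ to be precisely the winner that would result if agent $i$ instead removed someone else — forcing a different outcome. I expect the main obstacle to be this last combinatorial step: verifying that among agent $i$'s available opening eliminations there genuinely exist two yielding distinct SE winners, which requires a short case analysis on whether the ``reversed-play'' winner is itself among the candidates agent $i$ is able to eliminate at the root.
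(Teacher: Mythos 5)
Your final argument is essentially the paper's: use the first-stage integer to make agent $i$ perform the first elimination, then have $i$ eliminate the agent who would otherwise win, which forces a different outcome since an eliminated agent cannot win. The step you flag as the main obstacle is in fact immediate---at the root all agents remain and are available to eliminate, and no appeal to subgame-perfect play or the reversed-ordering computation is needed, because responsiveness quantifies over arbitrary fixed votes of the other agents rather than equilibrium play.
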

\begin{proof}
Consider the first elimination round and any agent. There are two cases.
In the first case, the agent is chosen to perform the first elimination. By
changing the agent that is eliminated to be
the current winner, the overall winner must change.
In the second case, the agent is not chosen to perform the first elimination. Suppose
this agent submits a different integer to ensure that they perform the
first elimination. If the agent now
eliminates the current winner, then the overall winner must again change. 
\end{proof}

Another desirable normative property in peer selection is
impartiality \cite{hmeconometrica2013}.
A peer selection mechanism is {\bf impartial} iff
an agent cannot influence whether they win or not. 
For example, we can design an impartial mechanism by
partitioning agents into two sets, having each set vote on the other, and then
randomly choosing between the two possible winners.
We can de-randomize this mechanism without violating impartiality
by having the agents who are not one of the two possible winners
playing a parity game to pick between the two possible winners.
The resulting ``game-last'' mechanism
is impartial but is not responsive. On the other hand, the de-randomized RSE
mechanism is responsive but not impartial. This is to be expected as 
impartiality and responsiveness are impossible to
achieve simultaneously. 

\begin{mytheorem}
No deterministic peer selection mechanism is both responsive and impartial. 
\end{mytheorem}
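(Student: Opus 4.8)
The plan is to extract a rigid ``all-or-nothing'' behaviour from impartiality and show it is incompatible with responsiveness unless the mechanism fails to produce a winner. Write $f$ for the deterministic mechanism, denote a vote profile by $(v_i,v_{-i})$, and recall that $f$ always returns exactly one agent as the winner. The first step is the key observation coming from impartiality: fix an agent $i$ and the other agents' votes $v_{-i}$; since impartiality says the truth value of ``$i$ wins'' does not depend on $v_i$, exactly one of the following holds --- either $f(v_i,v_{-i})=i$ for \emph{every} vote $v_i$ of agent $i$ (call $v_{-i}$ \emph{$i$-forcing}), or $f(v_i,v_{-i})\neq i$ for \emph{every} vote $v_i$ (call $v_{-i}$ \emph{$i$-excluding}).

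The second step combines this dichotomy with responsiveness. Suppose $f$ is responsive. Then no profile $v_{-i}$ can be $i$-forcing: if it were, then holding $v_{-i}$ fixed and varying $v_i$ the winner would always be $i$, so agent $i$ could never change the outcome, contradicting responsiveness. Hence for every agent $i$ and every $v_{-i}$, the profile $v_{-i}$ is $i$-excluding. The third step is to turn this into a contradiction: take an \emph{arbitrary} full profile $v$, let $w:=f(v)$ be its winner, and split $v=(v_w,v_{-w})$; since $v_{-w}$ is $w$-excluding we get $f(v_w,v_{-w})\neq w$, i.e.\ $f(v)\neq w$, contradicting $w=f(v)$. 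Therefore no deterministic mechanism can be both responsive and impartial.

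I do not anticipate a real obstacle --- the proof is essentially this short chain of implications --- but two small points deserve a line of care. First, responsiveness presupposes each agent has at least two available votes; if some agent has only one, the mechanism is not responsive and the statement holds vacuously, so that case may be set aside. Second, determinism is precisely what powers the dichotomy in the first step: for a randomized rule ``whether $i$ wins'' is a probability rather than a bit, so the all-or-nothing split --- and with it the whole argument --- no longer goes through, which is why the impossibility is confined to deterministic mechanisms.
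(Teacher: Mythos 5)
Your proof is correct and is essentially the paper's argument, just run in the contrapositive direction: the paper assumes some agent wins, invokes responsiveness to produce a second report under which that agent no longer wins, and concludes impartiality is violated, so no agent can ever win. Your reorganization via the ``$i$-forcing / $i$-excluding'' dichotomy is a clean (and arguably tidier) packaging of the same idea, and it also dispenses with the paper's separate two-agent case, which the general argument covers anyway.
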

\begin{proof}
With two agents, the only impartial mechanism selects
a fixed winner. This is not responsive. Consider then three or more
agents, and any reports for these agents. Pick any agent. Suppose this
agent wins. If the mechanism
is responsive, there must be some other report for this agent that changes the winner.
Consider the agent changing from this new report to its original
report. This change violates impartiality. This is a contradiction.
Hence, the assumption that the agent  wins is false. But this is
true for every possible agent. Thus, no agent can win. 
\end{proof}

We end this section with some important related work.
Bouveret {\it et al.}
\shortcite{seqelim} previously studied the SE mechanism for the more general
voting setting. 
They argue that the SE mechanism has low-communication complexity
and, with a good elimination ordering, can have good
properties such as returning the Borda winner.
Here we considered such a mechanism where
the elimination ordering is chosen randomly. The resulting
randomized mechanism is now anonymous (as is the de-randomized version).

\section{School Choice}

We turn next to a very practical social choice problem that impacts
citizens of many different countries.  In school choice, we consider
the two sided matching problem in which schools and students choose
each other (e.g. \cite{Abdulkadiroglu03:School,nashbm,boston}). 
One of the most popular mechanisms for school choice in every day use
around the world
is the student-proposing deferred acceptance (DA) mechanism of 
\cite{galeshapley}. This 
has many desirable normative properties
such as strategy proofness for the students (but not schools) and
stability of the solution.

One issue with the DA mechanism is that it
supposes schools have a complete ordering over students. In practice,
schools often only have broad preferences over students.
For example, those with siblings at the school might be strictly preferred to
those merely within the school district, and these two groups might be
strictly preferred to those outside the school district. However,
within each group, schools might be indifferent between students. It is therefore
common to order students within each group using a random lottery. 

To de-randomize the DA mechanism which uses such a random lottery, 
we could have the $n$ students in a particular group
instead play a modular arithmetic game by submitting an integer in
$[0,n!)$. We then use the result of this game to pick 
one of the $n!$ priority
orderings of students in this group via an inverse Lehmer code.
This school choice example thus illustrates the game-first method
to de-randomize a randomized mechanism: we first play a modular arithmetic game
to construct a random ``seed'' (ordering) which is then used in the
second step by the original mechanism. 

While this game is polynomial, requiring students
to submit just $O( n \log n)$ bits, it may nevertheless be
prohibitively expensive. For instance, with 1000 students to
order in a group, we would require a student to submit
an integer with several thousand decimal digits (as $1000! \approx 4.0$ x $10^{2568}$).
We propose a more efficient mechanism where, instead of each student
submitting $O(n \log n)$ bits, each of the $n$ students submits only $O(\log n)$
bits. These are then combined to form a (random) $O(n \log n)$ priority
order. The mechanism differs on whether $n$ is odd or even. We suppose
students are numbered from 0 to $n-1$. 

If $n=2k$, student 0 submits an integer $b_0 \in [0,n)$,
each student $i \in [1,n-1)$ submits two integers, $a_i \in [0,i]$ and
$b_i \in [0,n-i]$, and the final student $n-1$ submits an integer $a_{n-1} \in
[0,n)$. We construct a permutation ordering as follows.
The first student in the ordering is $(a_{n-1} + b_0) \mymod n$.
The second student is then the $(a_{n-2} + b_1) \mymod (n-1)$ largest
remaining student (counting from zero). 
The third student in the ordering
is then the $(a_{n-3} + b_2) \mymod (n-2)$ largest
remaining student, and so on.

If $n=2k+1$, 
student 0 submits two integers $a_0 \in [0,k]$ and $b_0 \in [0,n)$,
each student $i \in [1,n-1)$ submits two integers, $a_i \in [0,i]$ and
$b_i \in [0,n-i]$, and the final student $n-1$ submits two integers $a_{n-1} \in
[0,n)$ and $b_{n-1} \in [0,k]$. We construct a permutation from these as follows:
the first student in the permutation is $(a_{n-1} + b_0) \mymod n$.
The second student is then the $(a_{n-2} + b_1) \mymod (n-1)$ largest
remaining student (counting from zero), and so on.
There is, however, one exception in the exact middle of the
permutation as the two integers being added together would otherwise
be submitted by the
same agent. More precisely,
the $k+1$th student in the permutation is computed not as the
$(a_k + b_k) \mymod (k+1)$ largest remaining student
but instead as the $(a_k + b_k + a_0 + b_{n-1}) \mymod (k+1)$
largest. 

In the second stage of our de-randomized mechanism, we
run the regular DA mechanism using this
priority order to break ties that a school has within a group.
This de-randomized mechanism inherits the stability
of the underlying DA mechanism. In addition, whilst students can
act strategically, their strategic behaviours are limited to how they
play the modular arithmetic game. 

\begin{mytheorem}
  The de-randomized DA mechanism has
  a mixed Nash equilibrium in which two or more students 
  in the first stage select integers with uniform probability,
  and then all students select schools in the second stage sincerely.
  This equilibrium corresponds to a probability distribution over ex post stable
  matchings. 
\end{mytheorem}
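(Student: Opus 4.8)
The plan is to reduce the statement to three facts already in hand: the modular-arithmetic-game analysis (a modular sum one of whose summands is drawn uniformly from a complete residue system is itself uniform, and any quasi-uniform profile is a Nash equilibrium of the modular game); strategy-proofness of student-proposing DA on the student side once the schools' ties have been broken into strict priorities; and the fact that DA always returns a stable matching. First I would observe that the mechanism is a two-stage game and that, once stage one is over, the within-group tie-break $\sigma$ is a fixed strict refinement of the schools' weak priorities. Hence every stage-two subgame is an ordinary student-proposing DA game with strict priorities, in which truthfully reporting one's preference list is a weakly dominant strategy for each student. So ``all students sincere in stage two'' can be fixed, and it remains to analyse stage one.

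For the simple variant, stage one is exactly the modular arithmetic game with modulus $n!$, and the inverse Lehmer code is a bijection from $[0,n!)$ onto the priority orders of the group; so by the quasi-uniform-equilibrium result, any profile in which two or more students draw their integer uniformly from $[0,n!)$ is a Nash equilibrium of stage one whose output is a uniform random integer, hence a uniform random priority order, and this output is unchanged by any unilateral deviation (some other student still draws uniformly). For the efficient variant I would use the factorial number system: the construction assigns to the $j$-th slot a digit $d_j$ ranging over a set of size $n-j+1$, and if the $d_j$ are independent with each $d_j$ uniform on its range then the induced order is uniform over all $n!$ orders. The key bookkeeping is that every integer a student submits enters exactly one (non-trivial) digit, that the two students contributing to a given digit are distinct --- this is exactly what the special handling of the middle slot for odd $n$ secures --- and that each such digit already becomes uniform as soon as one of its two contributors draws uniformly from the corresponding complete residue system. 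Granting this, the profile in which every student submits all of its integers independently and uniformly makes each $d_j$ uniform and makes distinct digits depend on disjoint submitted integers, hence independent, so $\sigma$ is uniform; and a unilateral deviation by student $i$ touches only the (at most two) digits to which $i$ contributes, each of which still has its other contributor --- the partner of $i$ --- drawing uniformly, so those digits stay uniform and $\sigma$ stays uniform. In either variant, then, no unilateral stage-one deviation changes the law of $\sigma$, and combined with weak dominance of sincerity in stage two this makes the stated profile a (subgame perfect) Nash equilibrium.

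Finally I would identify the outcome distribution. Along the equilibrium path the mechanism draws $\sigma$ uniformly, breaks each school's within-group ties according to $\sigma$, and runs student-proposing DA on the true reported preferences; for every realisation of $\sigma$ the output is stable for the refined strict priorities and therefore stable for the original weak priorities, because any matching stable for a refinement of a priority structure is stable for the original. Thus the equilibrium induces the pushforward of the uniform law on $\sigma$ through ``DA applied to the tie-broken profile'', a probability distribution over matchings each of which is ex post stable --- and this is precisely the distribution produced by the original randomized DA-with-random-tie-break mechanism, so nothing is lost.

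The main obstacle is the efficient variant: one has to verify that the Lehmer-style construction from the paired submissions $(a_i,b_i)$ really is a bijection onto the priority orders when the digits are independent and uniform, that the submitted ranges are genuine complete residue systems and the contributor pairs are distinct (including the odd-$n$ middle correction), and that a unilateral deviation is genuinely absorbed by the deviator's partner. The simple variant is immediate from the quasi-uniform-equilibrium result, and the deferred-acceptance side of the argument is standard.
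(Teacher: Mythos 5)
Your proposal is correct and follows essentially the same route as the paper's own (very brief) argument: once the first-stage game fixes a strict tie-break, student-proposing DA makes sincere reporting weakly dominant in stage two, and the quasi-uniform equilibrium property of the modular game means no unilateral first-stage deviation changes the distribution of the priority order, so the profile is an equilibrium yielding a distribution over ex post stable matchings. Your additional analysis of the compact $O(\log n)$-bit encoding (digit-wise uniformity, disjoint contributor pairs, and absorption of deviations by the partner) goes beyond what the paper argues and correctly isolates the one point that genuinely needs checking there, namely that each digit's surviving contributor ranges over a complete residue system for that digit's modulus.
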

\myOmit{\begin{proof}
Suppose all but one student plays this uniform mixed strategy  
and picks sincerely in the second stage.  
Then, this one student gets the same expected return irrespective of  
the integers that they play. Note that, once the priority ordering has been  
selected, students have, as in the regular DA mechanism,
no incentive to misreport.  
Hence, this uniform mixed strategy is a Nash equilibrium and is a
distribution over ex post stable matchings. 
\end{proof}}

\section{Resource Allocation}

Our final application domain is resource allocation where we consider
(randomized) mechanisms for the allocation of indivisible items \cite{wijcai2020}.
We have $n$ agents who have preferences over $m$ indivisible items.
Our goal is to allocate items whole to agents according to these
preferences. 
In this setting, randomization allows us to deal with contested items
fairly. For example, rather than unilaterally give a contested item to one agent, we could toss a
coin to decide which agent gets it. 
Two of the most prominent randomized mechanisms with good normative
properties for this domain 
are the probabilistic serial (PS) and random priority (RP) mechanisms.

\subsection{Probabilistic Serial}

In the probabilistic serial (PS) mechanism agents simultaneously
``eat'' at a constant speed their most preferred remaining item
\cite{psm}.
This gives a randomized or probabilistic 
assignment which can easily be realized as a 
probability distribution over discrete
allocations. Unfortunately, the PS mechanism
is not strategy proof (see \cite{agmmnwaamas2015,agmmnwijcai2015}). However, it has good 
welfare and efficiency properties. It is, for instance,
SD-efficient and SD-envyfree\footnote{These are efficiency
  and fairness notions that are defined for ordinal preferences.
  In particular, the {\em SD} (stochastic dominance)
ordering prefers an allocation $p$ to an agent over
$q$ if the probability for the agent to get the top $i$ items
in $p$ is at least as large
as in $q$ for all $i \in [1,m]$. 
If an allocation to an agent is {\em SD-preferred} over
another then it has greater or equal
expected utility for all utilities consistent
with the agent's ordinal preferences. 
Notions like 
{\em SD-efficiency}, {\em SD-envy freeness} and {\em SD-strategy
proofness} can be defined in the natural way from the SD-preference ordering.}.

To de-randomize the PS mechanism, we first identify how
much precision is needed to represent the probabilities in the
random
assignment it generates. This precision will dictate the size of the
modular aritmetic game that agents will play. 
We prove here that the PS mechanism only needs
a polynomial number of bits with which to represent probabilities.
\begin{mytheorem}
  For every one of the $n$ agents and of the $m$ items,
  there exists an integer $k$ such
  that the PS
  mechanism allocates the item to the agent with probability
  $\frac{k}{(n!)^m}$.
\end{mytheorem}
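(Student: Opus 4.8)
The plan is to analyse the simultaneous-eating algorithm that underlies PS phase by phase, showing that the denominators of all quantities that arise grow by a factor of at most $n!$ per phase and that there are at most $m$ phases. First I would fix the (standard) description of the eating algorithm for strict preferences: at every instant each agent eats, at unit speed, their most preferred item that has not yet been fully consumed. The set of not-yet-consumed items changes only at the instants when some item is exhausted, so time decomposes into maximal phases $1,\dots,R$ during each of which the configuration ``which agent eats which item'' is constant. Write $c^{(r)}_j\in\{0,1,\dots,n\}$ for the number of agents eating item $j$ during phase $r$ (so item $j$ is consumed at integer rate $c^{(r)}_j$ in that phase), and $\rho^{(r)}_j$ for the amount of item $j$ still available at the start of phase $r$, with $\rho^{(1)}_j=1$. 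The duration of phase $r$ is $t_r=\min\{\,\rho^{(r)}_j/c^{(r)}_j : c^{(r)}_j\geq 1\,\}$, and $\rho^{(r+1)}_j=\rho^{(r)}_j-c^{(r)}_j t_r$. Since each phase exhausts at least one new item, $R\leq m$.

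The heart of the argument is an induction on $r$ showing that $(n!)^{r-1}\rho^{(r)}_j$ is a nonnegative integer for every item $j$, and consequently that $(n!)^{r}t_r$ is a nonnegative integer. The base case is immediate: $\rho^{(1)}_j=1$, and $t_1=1/c$ for some integer $1\leq c\leq n$, so $c$ divides $n!$ and $(n!)t_1$ is an integer. For the step, suppose $(n!)^{r-1}\rho^{(r)}_j$ is an integer for all $j$. The phase duration $t_r$ equals $\rho^{(r)}_{j^{\ast}}/c^{(r)}_{j^{\ast}}$ for the minimising item $j^{\ast}$, where $1\leq c^{(r)}_{j^{\ast}}\leq n$; since $c^{(r)}_{j^{\ast}}$ divides $n!$, the product $(n!)^{r-1}c^{(r)}_{j^{\ast}}$ divides $(n!)^{r}$, so $(n!)^{r}t_r$ is an integer. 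Then $(n!)^{r}\rho^{(r+1)}_j=(n!)^{r}\rho^{(r)}_j-c^{(r)}_j\,(n!)^{r}t_r$ is an integer combination of integers, completing the induction.

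Finally I would assemble the pieces. The probability that PS allocates item $j$ to agent $i$ is exactly the total amount $p_{ij}$ of item $j$ that agent $i$ eats, i.e.\ $p_{ij}=\sum_{r\,:\,i\text{ eats }j\text{ in phase }r} t_r$. Each $t_r$ is an integer multiple of $1/(n!)^{r}$, and since $r\leq R\leq m$ it is in particular an integer multiple of $1/(n!)^{m}$; a finite sum of such numbers is again an integer multiple of $1/(n!)^{m}$. Hence $p_{ij}=k/(n!)^{m}$ for some integer $k$, and $0\leq p_{ij}\leq 1$ forces $0\leq k\leq (n!)^{m}$, which is the claim.

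I do not expect a deep obstacle; the main care needed is in the bookkeeping around phases: one must justify that the eating configuration is piecewise constant with breakpoints only at item-exhaustion events (hence at most $m$ of them), handle the harmless case where several items are exhausted simultaneously (they merge into a single breakpoint), and the case where an agent runs out of acceptable items (its rate simply drops to $0$, still an integer). If preferences are only weak, a fixed tie-breaking rule inside the algorithm is needed, but it changes nothing in the count. I would also note in passing that the same argument with $\operatorname{lcm}(1,\dots,n)$ in place of $n!$ yields a smaller denominator, but $(n!)^{m}$ already suffices and is representable in $O(mn\log n)$ bits, which is all the de-randomization requires.
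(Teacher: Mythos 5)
Your proof is correct and follows essentially the same route as the paper's: decompose the eating process into at most $m$ phases ending at item-exhaustion events, and show by induction that the remaining amounts and phase durations pick up at most one extra factor of $n!$ in their denominators per phase, so every allocation probability is an integer multiple of $1/(n!)^m$. Your write-up is a somewhat cleaner formalization of the same induction (the paper works through the first two steps explicitly and then says the argument repeats), so no substantive difference.
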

\begin{proof}
We suppose without loss of generality that items are
fully ``consumed'' by the PS mechanism
in numerical order. We can consider then $m$ steps of the
PS mechanism, each ending with a new item
being fully consumed. We assume that only one item is ever
fully consumed at a time. We will discuss relaxing this assumption
shortly. 

Let $k_{i,j}$ be the number of agents eating the $i$th item at the
$j$th step of the probabilistic serial mechanism. For notational
simplicity, we write $k_i$ for $k_{i,i}$.
Hence, $k_{i}$ is the number of agents eating the $i$th item when it is fully
consumed. Note that $k_i \in [1,n]$ and $k_{i,j} \in [0,n)$ for $i \neq j$.

The first step takes $\frac{1}{k_1}$ time for the first item to
be fully consumed. The $k_1$ agents each get a share of
$\frac{1}{k_1}$ of probability of this first fully consumed item. Note that this is an
integer multiple of $\frac{1}{n!}$, and thus also of
$\frac{1}{(n!)^m}$. Consider any item $j > 1$.
The $k_{j,1}$ agents eating this item each get a share of
$\frac{1}{k_1}$ of probability of item $j$.  Note that this is again an
integer multiple $\frac{1}{(n!)^m}$. There is now
$1-\frac{k_{j,1}}{k_1}$
of item $j$ left. That is, $\frac{k_1-k_{j,1}}{k_1}$ of item $j$
left. Note this is an integer multiple of $\frac{1}{n!}$

Supposing $m\geq 2$, during the next step, $k_2$ agents
eat whatever remains of the second item. This takes
$(\frac{k_1- k_{2,1}}{k_1})/k_2$ time. Each of the $k_2$
agents eating this item thereby receives a share of
$\frac{k_1-k_{2,1}}{k_1 k_2}$ of probability of item 2.
Note this is an integer multiple of $\frac{1}{(n!)^2}$, and thus
of $\frac{1}{(n!)^m}$.
Consider any other item $j > 2$.
The $k_{j,1}$ agents eating this item each get a share of
$\frac{k_1-k_{2,1}}{k_1 k_2}$ of probability of item $j$.
Note that this is again an
integer multiple $\frac{1}{(n!)^m}$. There is now
$\frac{k_1-k_{j,1}}{k_1} - \frac{k_{j,1}(k_1-k_{2,1})}{k_1 k_2}$ 
of item $j$ left.  That is, $\frac{ k_1 k_2 - k_{j,1} (k_1 + k_2) +
  k_{j,1} k_{2,1}}{k_1 k_2}$. Note that this is an integer multiple of
$\frac{1}{(n!)^2}$.

The argument repeats in subsequent steps. In the $j$th step, agents add
an additional probability for an item which is an integer multiple
of $\frac{1}{(n!)^j}$. And the amount left of any item not fully
consumed is also an integer multiple of $\frac{1}{(n!)^j}$.
Adding together all the probabilities, we conclude at the final $m$th step
that for each agent and item, there exists an integer $k$ such
that the PS mechanism allocates the item to the agent with probability
$\frac{k}{(n!)^m}$.
Note that if two or more items are fully consumed at exactly the same time,
the argument is similar. However, we now have strictly less than $m$
steps but the conclusion still holds. 
Note that $(n!)^m$ can be represented in $O(mn \log(n))$ bits which
is polynomial in $n$ and $m$.
\end{proof}

We next define a two stage mechanism that de-randomizes the
PS mechanism. In the first stage,
the mechanism constructs the probabilistic allocation of the usual
PS mechanism. In the second stage, agents play a modular arithmetic
game, and the mechanism uses the outcome of this game
to construct a discrete ex post outcome. This
illustrates the ``game-last'' method to de-randomize a randomized mechanism.

In more detail, in the second stage,
each agent submits an integer in $[0,(n!)^m)$. Let $\sigma$
be the sum of these integers mod $(n!)^m$. The mechanism then allocates
item $j$ to agent $\mymin \{  k \ | \ \sum_{i=1}^k p_{i,j} \geq
\frac{\sigma}{(n!)^m}\}$ where $p_{i,j}$ is the probability
that agent $i$ is allocated item $j$ in the probabilistic allocation. 
In other words, the mechanism treats $\sigma / (n!)^m$ as a
random draw. Note that this is a worst case. As this is a ``game-last'' de-randomization,
we can analyse the
probabilistic allocation returned in the first stage and perhaps use
a smaller range of integers than $[0,(n!)^m)$. 
%
The second stage has a mixed Nash equilibrium
in which agents choose an integer uniformly at random. This gives
each agent an expected welfare equal to that of the randomized
allocation computed in the first stage. 

\begin{mytheorem}
  A mixed Nash equilibrium of the second stage has two or more agents
  select integers in $[0,(n!)^m)$ with uniform probability. 
\end{mytheorem}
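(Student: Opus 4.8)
The plan is to obtain this as an immediate analogue of Theorem~3, with Theorem~9 supplying the one new ingredient needed for the welfare claim that accompanies it. First I would fix the probabilistic allocation $(p_{i,j})$ produced in the first stage; once it is fixed, the second stage is exactly a modular arithmetic game on $[0,(n!)^m)$ whose outcome $\sigma$ is passed through the deterministic rule $j \mapsto \mymin\{k \mid \sum_{i=1}^k p_{i,j} \ge \sfrac{\sigma}{(n!)^m}\}$ to produce a discrete allocation, from which each agent derives utility. So every agent's second-stage payoff is a deterministic function of $\sigma$ alone.

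Next I would fix any strategy profile in which two or more agents play the uniform strategy on $[0,(n!)^m)$ while the remaining agents play arbitrary pure or mixed strategies, and argue it is a Nash equilibrium exactly as in the proof of Theorem~3. If any single agent deviates, at least one agent still plays the uniform strategy (there were at least two, so at least one is not the deviator), so by Theorem~2 the outcome $\sigma$ remains uniform on $[0,(n!)^m)$; since each agent's payoff depends only on the distribution of $\sigma$, the deviation leaves every agent's expected payoff unchanged. Hence no agent has a profitable deviation and the profile is an equilibrium.

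Then I would record the welfare consequence. When such an equilibrium is played, $\sigma$ is uniform on $\{0,\dots,(n!)^m-1\}$. By Theorem~9 each partial sum $\sum_{i=1}^k p_{i,j}$ is an integer multiple of $\sfrac{1}{(n!)^m}$, so the thresholds defining the selection rule fall exactly on grid points and the number of values of $\sigma$ for which agent $i$ is selected for item $j$ is $(n!)^m p_{i,j}$ (a single boundary value at $\sigma=0$ and at the top of the range can be absorbed by a fixed half-open-interval tie-break, and in any case shifts a probability by at most $\sfrac{1}{(n!)^m}$). Thus in this equilibrium agent $i$ receives item $j$ with probability $p_{i,j}$, the induced distribution over discrete allocations realizes the PS probabilistic assignment, and every agent's expected welfare equals that of the PS assignment.

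The equilibrium existence itself is genuinely immediate from Theorems~2 and~3, so the only place any care is needed is this last step: confirming that a uniform $\sigma$ reproduces the marginals $p_{i,j}$ exactly, rather than some nearby distribution. That is precisely what Theorem~9 was set up for --- it places every PS probability on the $\sfrac{1}{(n!)^m}$ grid that the modular game can address --- and I expect the $\sigma=0$ / end-of-range bookkeeping to be the only fiddly point, handled by the standard half-open tie-break convention.
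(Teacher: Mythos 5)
Your proof is correct and takes the same approach as the paper's (which is only sketched in the source): with two or more agents uniform, any unilateral deviation leaves at least one uniform player, so by Theorem~2 the outcome $\sigma$ stays uniform and no agent's expected payoff changes, exactly the quasi-uniform-equilibrium argument of Theorem~3. Your additional verification that a uniform $\sigma$ exactly reproduces the marginals $p_{i,j}$ (using the grid granularity from the $(n!)^m$ theorem, modulo the $\sigma=0$ boundary convention) goes beyond what the paper writes down, and correctly substantiates the welfare claim the paper only asserts in prose.
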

\myOmit{
\begin{proof}
Suppose all but the first agent play this uniform mixed strategy.
Then,
the first agent gets the same expected return irrespective of
the integer 
they play. Hence, this 
is a Nash equilibrium. 
\end{proof}
}

As with the PS mechanism itself, agents may strategically misreport
their ordinal preferences over items in the first stage of this
de-randomized PS mechanism. For example if one agent prefers
item 1 to 2 and 2 to 3, and a second agent prefers 2 to 3 and 3 to 1,
then the first agent can profitably
misreport that they prefer 2 to 1 and 1 to 3.
A pure Nash equilibrium of the PS mechanism
is guaranteed to exist, but is 
NP-hard to compute in general \cite{agmmnwijcai2015}. 
Indeed, even computing a best response is NP-hard
\cite{agmmnwaamas2015}.

One tractable special case is two agents since
there exists a linear time
algorithm to compute a pure Nash equilibrium with two agents which
yields the same probabilistic allocation as the truthful
profile \cite{agmmnwijcai2015}.
As the PS mechanism is envy-free ex ante, it follows
that a mixed Nash equilibrium with two
agents consisting of this pure strategy for the ordinal preferences and
an uniform mixed strategy for the modular arithmetic game is envy-free ex ante. 
Another tractable special case is 
identical ordinal preferences. The PS mechanism is strategy proof in
this case.  Hence, 
a combined mixed Nash equilibrium with identical ordinal preferences
(consisting of a sincere strategy for the ordinal preferences and
an uniform mixed strategy for the modular arithmetic game) is again
envy-free ex ante.

\subsection{Random Priority}

For simplicity, we consider house allocation (i.e. one item
to allocate per agent). However, our results easily generalize
to a setting with more (or fewer) items than agents.
The random priority (RP) mechanism
picks a random order of agents, and then agents
take turns according to this order to pick their most
preferred remaining item (house)
(e.g. \cite{Abdulkadiroglu98:Random,rsd}).

Random priority is used in many real world settings.
For example, the author was allocated undergraduate housing
using the RP mechanism.
RP is one of the few strategy proof mechanisms available
for house allocation. Indeed, any strategy proof, nonbossy\footnote{A mechanism is {\em nonbossy} if an agent
cannot change the allocation without changing their
own allocation.}
and
neutral\footnote{A mechanism is {\em neutral} if 
permuting the names of the items merely permutes the outcome}
mechanism is necessarily some form of serial
dictatorship like the RP mechanism \cite{Svensson99:Strategy}.

We define a two stage ``game-first'' mechanism that de-randomizes the
RP mechanism. In the first stage,
agents play a modular arithmetic
game, and the mechanism uses the outcome of this game
to construct a picking ordering for the second stage. 
Each agent submits an integer in $[0,n!)$. We 
sum these integers mod $n!$ and then convert this
into a permutation ordering of agents via an inverse Lehmer code.
In the second stage, the mechanism uses this ordering
with a serial dictatorship mechanism to allocate items to agents.
This is then another example of a ``game-first'' method to de-randomize
a randomized mechanism.

\begin{mytheorem}
  The de-randomized RP mechanism has
  a mixed Nash equilibrium in which two or more agents 
  in the first stage select integers in $[0,n!)$ with uniform probability,
  and all agents then select items in the second stage sincerely. 
\end{mytheorem}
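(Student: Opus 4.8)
The plan is to follow the same ``game-first'' template already used for the de-randomized RSE and DA mechanisms: show that, against the proposed profile, neither a first-stage deviation (in the modular arithmetic game) nor a second-stage deviation (in the serial dictatorship) is profitable, and that the induced outcome distribution is exactly that of RP.

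First I would fix an arbitrary agent and suppose every other agent plays the proposed strategy --- submit a uniform random integer in $[0,n!)$ in the first stage, and in the second stage always pick the most preferred remaining item. Because the profile has \emph{two or more} agents drawing uniformly, even after this agent deviates there is still at least one other agent playing a uniform mixed strategy in the modular arithmetic game. By the earlier theorem on that game (a single uniform drawer forces a uniform output regardless of how the others play), the sum modulo $n!$ is then a uniform random integer in $[0,n!)$, independent of the integer our agent submits. Since the inverse Lehmer code is a bijection from $[0,n!)$ onto the permutations of the agents, pushing the uniform distribution through it yields the uniform distribution over priority orderings, again independent of our agent's first-stage report. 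Hence the distribution over second-stage subgames faced by our agent does not depend on their first-stage choice, so \emph{any} first-stage strategy --- in particular the uniform one --- is a best response.

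Next I would handle the second stage. Once a priority ordering has been realised, the second stage is exactly a serial dictatorship, which is strategy proof: whatever the ordering and whatever items remain when it is an agent's turn, picking the most preferred remaining item weakly dominates every alternative and is unaffected by the other agents' reports. So ``pick sincerely'' is a best response in every second-stage subgame, hence also in expectation over the uniform distribution over orderings induced by the other agents. Combining the two stages, no unilateral deviation (first stage, second stage, or both) is profitable, so the proposed profile is a mixed Nash equilibrium; and since the realised ordering is uniform over all $n!$ permutations with agents picking sincerely, the induced distribution over allocations coincides with that of RP, so the de-randomized mechanism inherits RP's ex post properties.

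The one point needing care is the hypothesis that at least two agents draw uniformly: it guarantees that a lone deviator still confronts genuine randomness in the first stage, since with a single uniform drawer that agent could switch to a pure strategy and collapse the ``randomness''. Everything else reduces to the bijectivity of the Lehmer code and the classical strategy-proofness of serial dictatorship, so I expect no substantive obstacle here --- the content of the result is really that the game-first construction correctly relocates all strategic manipulation into the (strategically inert, by the modular arithmetic theorem) first stage.
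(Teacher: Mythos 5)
Your proposal is correct and follows essentially the same route as the paper's (omitted) argument: a lone deviator still faces a uniform outcome of the modular arithmetic game by the earlier theorem, so the first stage is strategically inert, and once the ordering is fixed the second stage is a strategy-proof serial dictatorship, so sincere picking is a best response. The extra details you supply (bijectivity of the inverse Lehmer code, the coincidence of the induced distribution with RP) are accurate and only make explicit what the paper leaves implicit.
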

\myOmit{
\begin{proof}
Suppose all but the first agent play this uniform mixed strategy.
Then, the first agent gets the same expected return irrespective of
whatever integer 
they play. Once the ordering has been
selected, agents have no incentive to misreport.
Hence, 
this is a Nash equilibrium.
\end{proof}
}

Like RP which returns a probability distribution over
ex post outcomes that are Pareto efficient, the de-randomized
RP mechanism is Pareto efficient. 
Note that we could play the modular arithmetic game second,
after we first compute the randomized allocation returned by the
RP mechanism, and then use the outcome of the modular arithmetic game to perform
a ``random draw'' like we did when de-randomizing the PS mechanism.
However, it is NP-hard to compute the
probabilistic allocation returned by RP \cite{saban}. Therefore
this is not a tractable de-randomization.

\section{Conclusions}

We have proposed three related methods of de-randomizing
mechanisms: ``game-first'', ``game-last'' and
``game-interleaved''. Each introduces a modular arithmetic game
which can inject randomness into the mechanism via the randomness in
agents' play. Surprisingly, these de-randomized mechanisms retain many of the good
normative properties of the underlying randomized mechanism. For
instance, all but one of the de-randomized mechanisms
is ``quasi-strategy proof'' as they have mixed Nash
equilibria in which agents play the game randomly but report other
preferences sincerely. We demonstrate how these de-randomization
methods work
in six different domains: 
voting, facility location, task allocation, school choice, peer
selection,  and resource allocation. In one case, de-randomization
additionally introduced a new and desirable normative property (namely
that the
de-randomized peer selection mechanism was 
responsive always to the preferences of any agent).

\section*{Acknowledgements}

The author is supported by the Australian Research Council
through an ARC Laureate Fellowship FL200100204. 

\bibliographystyle{named}

\begin{thebibliography}{}

\bibitem[\protect\citeauthoryear{Abdulkadiroglu and
  Sonmez}{1998}]{Abdulkadiroglu98:Random}
A.~Abdulkadiroglu and T.~Sonmez.
\newblock Random serial dictatorship and the core from random endowments in
  house allocation problems.
\newblock {\em Econometrica}, 66(3):689--702, May 1998.

\bibitem[\protect\citeauthoryear{Abdulkadiroglu \bgroup \em et al.\egroup
  }{2005}]{boston}
A.~Abdulkadiroglu, P.A. Pathak, A.E. Roth, and T.~Sonmez.
\newblock The {Boston} {Public} {School} {Match}.
\newblock {\em American Economic Review}, 95(2):368--371, 2005.

\bibitem[\protect\citeauthoryear{Abdulkadiro\u{g}lu and
  S\"{o}nmez}{2003}]{Abdulkadiroglu03:School}
A.~Abdulkadiro\u{g}lu and T.~S\"{o}nmez.
\newblock School choice: A mechanism design approach.
\newblock {\em American Economic Review}, 93(3):729--747, 2003.

\bibitem[\protect\citeauthoryear{Alon \bgroup \em et al.\egroup
  }{2011}]{selectingselectors}
N.~Alon, F.~Fischer, A.~Procaccia, and M.~Tennenholtz.
\newblock Sum of us: Strategyproof selection from the selectors.
\newblock In {\em Proceedings of the 13th Conferene on Theoretical Aspects of
  Rationality and Knowledge}, pages 101--110, 2011.

\bibitem[\protect\citeauthoryear{Aziz \bgroup \em et al.\egroup
  }{2015a}]{agmmnwijcai2015}
H.~Aziz, S.~Gaspers, S.~Mackenzie, N.~Mattei, N.~Narodytska, and T.~Walsh.
\newblock Equilibria under the probabilistic serial rule.
\newblock In Q.~Yang and M.~Wooldridge, editors, {\em Proceedings of the
  Twenty-Fourth International Joint Conference on Artificial Intelligence,
  ({IJCAI} 2015)}, pages 1105--1112, 2015.

\bibitem[\protect\citeauthoryear{Aziz \bgroup \em et al.\egroup
  }{2015b}]{agmmnwaamas2015}
H.~Aziz, S.~Gaspers, S.~Mackenzie, N.~Mattei, N.~Narodytska, and T.~Walsh.
\newblock Manipulating the probabilistic serial rule.
\newblock In G.~Weiss, P.~Yolum, R.H. Bordini, and E.~Elkind, editors, {\em
  Proceedings of the 2015 International Conference on Autonomous Agents and
  Multiagent Systems ({AAMAS} 2015)}, pages 1451--1459. {ACM}, 2015.

\bibitem[\protect\citeauthoryear{Aziz \bgroup \em et al.\egroup
  }{2016}]{spsmae}
Haris Aziz, Omer Lev, Nicholas Mattei, Jeffrey~S. Rosenschein, and Toby Walsh.
\newblock Strategyproof peer selection: Mechanisms, analyses, and experiments.
\newblock In Dale Schuurmans and Michael~P. Wellman, editors, {\em Proceedings
  of the Thirtieth {AAAI} Conference on Artificial Intelligence, February
  12-17, 2016, Phoenix, Arizona, {USA}}, pages 397--403. {AAAI} Press, 2016.

\bibitem[\protect\citeauthoryear{Aziz \bgroup \em et al.\egroup
  }{2017}]{agwadt2017}
Haris Aziz, Paul Goldberg, and Toby Walsh.
\newblock Equilibria in sequential allocation.
\newblock In J{\"{o}}rg Rothe, editor, {\em Proceedings of 5th International
  Conference on Algorithmic Decision Theory, {ADT} 2017}, volume 10576 of {\em
  Lecture Notes in Computer Science}, pages 270--283. Springer, 2017.

\bibitem[\protect\citeauthoryear{Aziz \bgroup \em et al.\egroup
  }{2019}]{spsrpa}
Haris Aziz, Omer Lev, Nicholas Mattei, Jeffrey~S. Rosenschein, and Toby Walsh.
\newblock Strategyproof peer selection using randomization, partitioning, and
  apportionment.
\newblock {\em Artif. Intell.}, 275:295--309, 2019.

\bibitem[\protect\citeauthoryear{Bogomolnaia and Moulin}{2001}]{psm}
A.~Bogomolnaia and H.~Moulin.
\newblock A new solution to the random assignment problem.
\newblock {\em Journal of Economic Theory}, 100(2):295--328, October 2001.

\bibitem[\protect\citeauthoryear{Bouveret \bgroup \em et al.\egroup
  }{2017}]{seqelim}
Sylvain Bouveret, Yann Chevaleyre, Fran{\c{c}}ois Durand, and
  J{\'{e}}r{\^{o}}me Lang.
\newblock Voting by sequential elimination with few voters.
\newblock In Carles Sierra, editor, {\em Proceedings of the Twenty-Sixth
  International Joint Conference on Artificial Intelligence, {IJCAI} 2017,
  Melbourne, Australia, August 19-25, 2017}, pages 128--134. ijcai.org, 2017.

\bibitem[\protect\citeauthoryear{Ergin and Sonmez}{2006}]{nashbm}
H.~Ergin and T.~Sonmez.
\newblock {Games of school choice under the {Boston} mechanism}.
\newblock {\em Journal of Public Economics}, 90(1-2):215--237, January 2006.

\bibitem[\protect\citeauthoryear{Gale and Shapley}{1962}]{galeshapley}
D.~Gale and L.S. Shapley.
\newblock College admissions and the stability of marriage.
\newblock {\em The American Mathematical Monthly}, 69:9--15, 1962.

\bibitem[\protect\citeauthoryear{Gibbard}{1973}]{gs1}
A.~Gibbard.
\newblock Manipulation of voting schemes: A general result.
\newblock {\em Econometrica}, 41:587--601, 1973.

\bibitem[\protect\citeauthoryear{Gibbard}{1977}]{gibbard77}
Allan Gibbard.
\newblock {Manipulation of Schemes That Mix Voting with Chance}.
\newblock {\em Econometrica}, 45(3):665--681, April 1977.

\bibitem[\protect\citeauthoryear{Holzman and Moulin}{2013}]{hmeconometrica2013}
R.~Holzman and H.~Moulin.
\newblock Impartial nominations for a prize.
\newblock {\em Econometrica}, 81(1):173--196, 2013.

\bibitem[\protect\citeauthoryear{Kalinowski \bgroup \em et al.\egroup
  }{2013}]{knwxaaai13}
T.~Kalinowski, N.~Narodytska, T.~Walsh, and L.~Xia.
\newblock Strategic behavior when allocating indivisible goods sequentially.
\newblock In {\em Proceedings of the Twenty-Seventh AAAI Conference on
  Artificial Intelligence (AAAI 2013)}. AAAI Press, 2013.

\bibitem[\protect\citeauthoryear{Mattei \bgroup \em et al.\egroup
  }{2017}]{mswijcai17}
N.~Mattei, A.~Saffidine, and T.~Walsh.
\newblock Mechanisms for online organ matching.
\newblock In {\em Proceedings of the 26th International Joint Conference on
  AI}. International Joint Conference on Artificial Intelligence, 2017.

\bibitem[\protect\citeauthoryear{Mattei \bgroup \em et al.\egroup
  }{2018}]{mswaies18}
N.~Mattei, A.~Saffidine, and T.~Walsh.
\newblock Fairness in deceased organ matching.
\newblock In {\em Proceedings of 1st {AAAI/ACM} {Conference} on {AI}, {Ethics},
  and {Society}}, 2018.

\bibitem[\protect\citeauthoryear{Merrifield and Saari}{2009}]{nsf1}
M.R. Merrifield and D.G. Saari.
\newblock Telescope time without tears: a distributed approach to peer review.
\newblock {\em Astronomy \& Geophysics}, 50(4):4.16--4.20, 2009.

\bibitem[\protect\citeauthoryear{Moulin}{1983}]{moulinbook}
Herv{\'e} Moulin.
\newblock {\em Advanced Textbooks in Economics, Volume 18: {The} strategy of
  social choice}.
\newblock North Holland, 1983.

\bibitem[\protect\citeauthoryear{Nisan and Ronen}{2001}]{amd}
Noam Nisan and Amir Ronen.
\newblock Algorithmic mechanism design.
\newblock {\em Games and Economic Behavior}, 35(1):166--196, 2001.

\bibitem[\protect\citeauthoryear{Procaccia and
  Tennenholtz}{2013}]{ptacmtec2013}
A.D. Procaccia and M.~Tennenholtz.
\newblock Approximate mechanism design without money.
\newblock {\em ACM Trans. Econ. Comput.}, 1(4):18:1--18:26, December 2013.

\bibitem[\protect\citeauthoryear{Sab{\'a}n and Sethuraman}{2013}]{sswine2013}
D.~Sab{\'a}n and J.~Sethuraman.
\newblock The complexity of computing the random priority allocation matrix.
\newblock In Y.~Chen and N.~Immorlica, editors, {\em Web and Internet Economics
  - 9th International Conference, WINE 2013, Cambridge, MA, USA, December
  11-14, 2013, Proceedings}, volume 8289 of {\em Lecture Notes in Computer
  Science}, page 421. Springer, 2013.

\bibitem[\protect\citeauthoryear{Saban and Sethuraman}{2015}]{saban}
Daniela Saban and Jay Sethuraman.
\newblock The complexity of computing the random priority allocation matrix.
\newblock {\em Mathematics of Operations Research}, 40(4):1005--1014, 2015.

\bibitem[\protect\citeauthoryear{Satterthwaite}{1975}]{gs2}
M.~Satterthwaite.
\newblock Strategy-proofness and {Arrow's} conditions: Existence and
  correspondence theorems for voting procedures and social welfare functions.
\newblock {\em Journal of Economic Theory}, 10:187--216, 1975.

\bibitem[\protect\citeauthoryear{Svensson}{1999}]{Svensson99:Strategy}
L.-G. Svensson.
\newblock Strategy-proof allocation of indivisible goods.
\newblock {\em Social Choice and Welfare}, 16(4):557--567, 1999.

\bibitem[\protect\citeauthoryear{Walsh}{2014}]{wecai2014}
T.~Walsh.
\newblock The {PeerRank Method} for peer assessment.
\newblock In Torsten Schaub, editor, {\em Proc. of the 21st European Conference
  on Artificial Intelligence (ECAI-2012)}, Frontiers in Artificial Intelligence
  and Applications. IOS Press, 2014.

\bibitem[\protect\citeauthoryear{Walsh}{2016}]{waaai16}
T.~Walsh.
\newblock Strategic behaviour when allocating indivisible goods.
\newblock In D.~Schuurmans and M.~Wellman, editors, {\em Proceedings of the
  Thirtieth {AAAI} Conference on Artificial Intelligence}, pages 4177--4183.
  {AAAI} Press, 2016.

\bibitem[\protect\citeauthoryear{Walsh}{2018}]{2062}
Toby Walsh.
\newblock {\em 2062: {The} {World} that {AI} {Made}}.
\newblock La Trobe University Press, 2018.

\bibitem[\protect\citeauthoryear{Walsh}{2020}]{wijcai2020}
Toby Walsh.
\newblock Fair division: The computer scientist’s perspective.
\newblock In Christian Bessiere, editor, {\em Proceedings of the Twenty-Ninth
  International Joint Conference on Artificial Intelligence, {IJCAI-20}}, pages
  4966--4972. International Joint Conferences on Artificial Intelligence
  Organization, 7 2020.

\bibitem[\protect\citeauthoryear{Walsh}{2022}]{mbb}
Toby Walsh.
\newblock {\em {Machines} {Behaving} {Badly}: {The} {Morality} of {AI}}.
\newblock La Trobe University Press, 2022.

\bibitem[\protect\citeauthoryear{Wong \bgroup \em et al.\egroup }{2021}]{rsg}
Alice Wong, Garance Merholz, and Uri Maoz.
\newblock Characterizing human random-sequence generation in competitive and
  non-competitive environments using {Lempel}–{Ziv} complexity.
\newblock {\em Scientific Reports}, 11:Art. No. 20662, Oct 2021.
\newblock Funding by John Templeton Foundation.

\bibitem[\protect\citeauthoryear{Zhou}{1990}]{rsd}
L.~Zhou.
\newblock On a conjecture by {Gale} about one-sided matching problems.
\newblock {\em Journal of Economic Theory}, 52(1):123 -- 135, 1990.

\end{thebibliography}


\end{document}